	\documentclass{amsart}
 	\usepackage{url}
	\usepackage{amsthm}
	\usepackage{graphicx}
	\newtheorem{theorem}{Lemma}
%
%
%
%
\begin{document}
\author[B.~Mielnik]{Bogdan Mielnik}
\address[B.~Mielnik]{Department of Physics, CINVESTAV--IPN\\ AP 14--740, Mexico City.}
\email[B.~Mielnik]{bogdan@fis.cinvestav.mx}
\author[J.~Fuentes]{Jes\'us Fuentes}
\address[J.~Fuentes]{Department of Physics, CINVESTAV--IPN\\ AP 14--740, Mexico City.}
\email[J.~Fuentes]{jfuentes@fis.cinvestav.mx}
\title{Squeezed Fourier Meets Toeplitz Algebras}
%
%
%
%
\begin{abstract}
We look for new steps on the dynamical operations that may squeeze simultaneously some families of quantum states, independently of their initial shape, induced  by softly acting external fields which might produce the squeezing of the canonical observables $q,p$ of charged particles. Also, we present some exactly solvable cases of the problem which appear in the symmetric evolution intervals permitting to find explicitly the time dependence of the external fields needed to generate the required evolution operators. Curiously, our results are interrelated with a simple, non--trivial, anti--commuting algebra of Toeplitz which describes the problem more easily than the frequently used Ermakov--Milne invariants.
\end{abstract}
\maketitle
%
%
%
%
\section{Introduction}
\label{s:intro}

The simplest phenomena of {\em squeezing} can occur in the evolution  of canonical variables $q,p$ for the non--relativistic time dependent, quadratic Hamiltonians in one dimension with variable elastic forces in either classical or quantum theory, namely
\begin{equation}
\label{eq:1}
H(\tau) = \frac{p^2}{2}+\beta(\tau)\frac{q^2}{2}
\end{equation}
where $q, p$ are the dimensionless canonical position and momentum,  $\tau$  is a dimensionless time, and we adopt units in which the mass $m = 1$; in quantum case also  $\hbar = 1$, $[q,p] = \mathrm{i}$. Our question is, whether the evolution generated by the Hamiltonian \eqref{eq:1} can at some moment produce a unitary operator $U$ transforming $q\rightarrow \lambda q,\ p\rightarrow \frac{1}{\lambda} p$ with $0\neq\lambda\in\mathbb{R}$, {\em i.e.} squeezing $q$ and expanding $p$ or {\it vice versa}? A more general question is, whether for any pair of quantum observables $a=a^\dagger$ and  $b=b^\dagger$, commuting to a number $[a,b]=\mathrm{i}\alpha$ $(\alpha\in\mathbb{R})$ an evolution operator can  transform $ a\rightarrow \lambda a,\  b\rightarrow \frac{1}{\lambda} b$, {\em i.e.} expanding $a$ at the cost of $b$. 


\subsection{Structure of the present work}

This manuscript is planned as follows. In the first part of section \ref{s:mathieu} we present and classify the analogues of optical operations for massive particles. Then, we introduce the generally unnoticed squeezing effects in the experimental conditions traditional for Paul's traps. Section \ref{s:fourier} is dedicated to the discussion of our principal idea on the squeezing by means of distorted Fourier transformations caused by sharp pulses of the oscillator potentials. Even if our solutions do not yet provide the exact laboratory prescriptions, they indicate that the squeezing effects can be experimentally approximated. This is thanks to the exact solutions described in sections \ref{s:exact}, where we report an elementary case of the Toeplitz algebra which permits to design the exact, soft equivalents of the desired dynamical effects. Then, in section \ref{s:test}, we study some elementary cases to test our method. Finally, in section \ref{s:finale} we report some fundamental hopes but also difficulties.

For convenience, our mathematical calculations are carried in dimensionless variables but the results are then translated into the physical units. As a companion for this paper, the reader is welcome to freely access to our codes and simulations \cite{algorithms} and to our previous discussion on this algorithm \cite{squeezing}.

%
%
%
%
\section{General Aspects}
\label{s:mathieu}


%

\subsection{The classical--quantum duality}

Below, we shall notice also the evolution laws induced by slightly more general Hamiltonians: 

\begin{equation}
\label{eq:2}
H(\tau) = \gamma(\tau) \frac{p^2}{2}+\beta(\tau)\frac{q^2}{2}
\end{equation}

Including the amply discussed Batman Hamiltonian.The evolution equations generated by \eqref{eq:2} in both classical and quantum cases imply exactly the same linear equations for either classical or quantum canonical variables: $\frac{dq}{d\tau} = \gamma(\tau) p(\tau)$, $\frac{dp}{d\tau} =-\beta(\tau)q(\tau)$, leading in any time interval $[\tau_0, \tau]$ to the identical  transformation of either classical or quantum canonical  pair, expressed by the same family of $2\times2$ symplectic {\em evolution matrices}  $u(\tau,\tau_0)$:

\begin{equation}
\label{eq:3}
\begin{Vmatrix} 
q(\tau) \\ p(\tau)
\end{Vmatrix} = u(\tau,\tau_0) 
\begin{Vmatrix} q(\tau_0)\\p(\tau_0)\end{Vmatrix}, \quad u(\tau_0,\tau_0) = 1,
\end{equation}

determined by the matrix equations
\begin{equation}
\label{eq:4}
\partial_{\tau} u(\tau,\tau_0) = \Lambda(\tau)u(\tau,\tau_0); \quad \Lambda(\tau) = \begin{Vmatrix}0&&\gamma(\tau)\\-\beta(\tau)&&0\end{Vmatrix}.
\end{equation}

The reciprocity between the classical and quantum pictures does not end up here.
It turns out that, in absence of spin, each unitary evolution operator $U(\tau,\tau_0)$ in $L^2(\mathbb{R})$ generated by the time dependent, quadratic Hamiltonian \eqref{eq:1} is determined, up to a phase factor, by the canonical transformation that it induces (see \cite{reed75, mielnik77, mielnik11, mielnik13}).



The behaviour of non--relativistic particles in one dimension subject to variable oscillatory fields \eqref{eq:1} was studied with the aim to describe the particle motion in Paul's traps \cite{paul90}, then in ample contributions of Glauber \cite{glauber92} and others. Some results about the  {\em squeezing} caused by variable electromagnetic fields were studied by Baseia {\em et al.} \cite{baseia92,baseia93}.   

Can one still achieve something more? As it seems, in some circumstances, instead of using the Wronskian to compare the independent solutions of \eqref{eq:1} an easier method might be to use the {\em classical--quantum duality} permitting to deduce the quantum evolution operators from the classical motion. The method could not work  for the general quantum motion, but it does for all quadratic Hamiltonians including \eqref{eq:1} -- cf. the encyclopaedic report by Dodonov \cite{dodonov02}. Here, we consider the variable external fields as the only credible source of such phenomenon. So, we skip all formal results obtained for time dependent masses, material constants, etc.

In quantum optics of {\em coherent photon states}, an important role belongs to the {\em parametric amplification} of Mollow and Glauber \cite{mollow67}. Yet, in the description of massive particles the Heisenberg's evolution of the canonical observables ({\em i.e.}, the trajectory picture) receives less attention, even though it allows to extend the optical concepts \cite{wolf04,wolf12}. This can be of special interest for charged particles in the ion traps, driven by the time dependent fields, coinciding or not with the formula of Paul \cite{paul90}. The most interesting here is the case of quite arbitrary periodic potentials.

\subsection{Elementary models}

Some traditional models illustrate  the above {\em duality doctrine}. Two of them seem of special interest.\bigskip

{\bf i }
The evolution of charged particles in the hyperbolically shaped ion traps\cite{paul90}.The Paul's potentials $\Phi({\bf x},t)$ in the trap interior generated by the voltage $\Phi(t)$ on the surfaces are either
\begin{equation*}
\Phi=\frac{e\Phi(t)}{r_0^2}\left(\frac{x^2}{2}+\frac{y^2}{2}-z^2\right) \quad \text{or} \quad \Phi=\frac{e\Phi(t)}{r_0^2}\left(\frac{x^2}{2}-\frac{y^2}{2}\right).
\end{equation*}
The problem then splits into the partial Hamiltonians of the type
\begin{equation*}
H(t)=\frac{p^2}{2}+\frac{e\Phi(t)}{r_0^2}\frac{q^2}{2},
\end{equation*}
where $q,p$ represent just one of independent pairs of canonical observables. 

By introducing the new dimensionless time variable $\tau = \frac{t}{T}$, where $T$ stands for an arbitrarily chosen time scale, each Hamiltonian is reduced to a particular case of \eqref{eq:1}:
\begin{equation}
\label{eq:4}
\tilde{H}(\tau) = H(t)T = \frac{\tilde{p}^2}{2}+\beta(\tau)\frac{\tilde{q}^2}{2}, \quad \beta(\tau) = \frac{e\Phi(t)T^2}{r_0^2 m}, 
\end{equation}
where $\beta(\tau)$ is dimensionless and the new  canonical variables $\tilde{q} =  \sqrt{\frac{m}{T}}q$ and $\tilde{p} = \sqrt{\frac{T}{m}} p$ are then expressed in the same units (square roots of the action), leading to the dimensionless evolution matrices $u(\tau,\tau_0)$ identical for the classical and quantum dynamics. So, {\em without even knowing} about the existence of quantum mechanics, the dimensionless quantities can be now constructed:
\begin{equation}
\label{eq:5}
q_d=\frac{\tilde{q}}{\sqrt{\hbar}} = q\sqrt{\frac{m}{\hbar T}}, \quad p_d = \frac{\tilde{p}}{\sqrt{\hbar}} =p\sqrt{\frac{T}{\hbar m}}
\end{equation}
and $H_d(t) = \frac{H(t)T}{\hbar}$, where $\hbar$ is an arbitrarily chosen action unit ({\em cf.} \cite{delgado98}) Indeed, for the quadratic Hamiltonians all results of {\em Mr. Tompkins in wonderland} by George Gamov, can be deduced just by rescaling time, canonical variables and the external fields. By knowing already about the quantum background of the theory, an obvious (though not obligatory) option is to choose $\hbar$ as the Planck constant (though the other constants  proportional to $\hbar$ of Planck are neither excluded\. By dropping the unnecessary indexes, one ends up with the evolution problem \eqref{eq:1}, with an arbitrary time dependent $\beta(\tau)$.
\bigskip

{\bf ii }
The similar dynamical law applies to charged particles moving in a time dependent magnetic field, given (in the first step of Einstein--Infeld--Hoffmann approximation \cite{infeld60}) by ${\bf B}(t ) = {\bf n} B(t)$, where ${\bf n}$ is a constant unit vector defining the central $z-$axis of a cylindrical solenoid. Since ${\bf B}(t)$ has the vector potential ${\bf A}(x,t) = \frac {1}{ 2} {\bf B}(t)\times{\bf x}$, the motion of the non--relativistic charged, {\em spinless} particle on the plane $\mathcal{P}_\perp$ perpendicular to ${\bf n}$, obeys the simplified Hamiltonian
\begin{equation*}
H(t) = \frac{1}{2m}\left[{\bf p}^2+\left(\frac{eB(t)}{2c}\right)^2{\bf x}^2\right],
\end{equation*}
where ${\bf p}$ and ${\bf x}$ are the pairs of canonical momenta and positions on $\mathcal{P}_\perp$. This, after using the dimensionless variables $\tau = \frac{t}{T}$, with $x,p_x$ and $y,p_y$ replacing $q,p$ in \eqref{eq:5}, leads again to a pair of motions of type \eqref{eq:1}, with the dimensionless $\tau, q , p$ and
\begin{equation}
\label{eq:8}
\beta(\tau)=\kappa^2(\tau)=\left(\frac{eTB(T\tau)}{2mc}\right)^2.
\end{equation}
In case $B(t)$ oscillates periodically with frequency $\omega$ the natural dimensionless time $\tau = \omega t$ leads again to a dimensionless $H_d = \frac{H(t)}{\hbar\omega}$, although the stability thresholds no longer obey the Strutt diagram \cite{bender78,mielnik10}.

Having say that, below, we shall be specially interested in the auxiliary operations of amplification or squeezing as a chance to apply the ideas of the demolition free measurements of Thorne {\em et al.} \cite{nondemolishing78,nondemolishing80}. Of course, not all techniques available awake an absolute confidence, including the decoherence \cite{bohm66}, the instability \cite{haroche98}, and the so called {\em delayed choice} \cite{wheeler84}. Recently, even the entangled states and their radiation effects are under some critical attention \cite{Lloyd12,ma,dolev,price}.


For convenience, our mathematical calculations are carried in dimensionless variables but the results are then translated into the physical units.
%
%
%
%
\section{Numerical Aspects}
\label{s:mathieu}
In quantum optics of {\em coherent photon states}, an important role belongs to the {\em parametric amplification} of Mollow and Glauber \cite{mollow67}. Yet, in the description of massive particles the Heisenberg's evolution of the canonical observables ({\em i.e.}, the trajectory picture) receives less attention, even though it allows to extend the optical concepts \cite{wolf04,wolf12}. This can be of special interest for charged particles in the ion traps, driven by the time dependent fields, coinciding or not with the formula of Paul \cite{paul90}. The most interesting here is the case of quite arbitrary periodic potentials.

\subsection{Classification}

For all periodic fields, $\beta(\tau +\footnotesize{\text{T}}) = \beta(\tau)$, the most important matrices \eqref{eq:2} are $u(\footnotesize{\text{T}} +\tau_0, \tau_0)$ describing the repeated evolution incidents. Since they are symplectic, their algebraic structure is fully defined just by one number $\Gamma=\text{Tr}~u(\footnotesize{\text{T}} +\tau_0, \tau_0) $, (without referring to the Ermakov--Milne invariants \cite{ermakov08,milne30,mayo02}). Though the matrices $u(\footnotesize{\text{T}} +\tau_0, \tau_0)$ depend on $\tau_0$, $\Gamma$ does not, permitting to classify the evolution processes generated by $\beta$ in any periodicity interval. The distinction between the three types of behaviour is quite elementary:

\begin{enumerate}
\item[{\bf I}]  If $\vert\Gamma\vert < 2$ the repeated $\beta$--periods, no matter the details, produce  an evolution matrix with a pair of eigenvalues $e^{\mathrm{i}\sigma}$ and  $e^{-\mathrm{i}\sigma}$ ($\sigma \in \mathbb{R}, 0<\sigma<\frac{\pi}{2}$) generating a stable (oscillating) evolution process. It allows the construction of 
the global creation and annihilation operators $a^+, a^-$ defined by the row eigenvectors of  $u(\footnotesize{\text{T}} +\tau_0, \tau_0)$, but featuring the evolution in the whole periodicity interval (compare \cite{mielnik11,mielnik10}).

\item[{\bf II}] If $\vert\Gamma\vert = 2$ the process generated by $\beta$ belongs to the {\em stability threshold} with eigenvalues $\pm1$ permitting to approximate a family of interesting dynamical operations ({\em cf.} the discussions in \cite{mielnik11,mielnik13,mielnik10}).

\item[{\bf III}] If $\vert\Gamma\vert > 2$ then each one--period evolution matrix has now  a pair of real non-vanishing eigenvalues, $\lambda^+ = \frac{1}{\lambda^-}$ with $\lambda^+ = e^\sigma$ and  $\lambda^-= e^{-\sigma}$ producing the squeezing of the corresponding pair of canonical observables $a^\pm$  defined again by the eigenvectors of  $u(\footnotesize{\text{T}} +\tau_0, \tau_0)$, that is, $a^+$ expands at the cost of contracting $a^-$ or vice versa.
\end{enumerate}

\subsection{The Mathieu squeezing}

The above global data seem more relevant than the description in terms of the {\em instantaneous} creation and annihilation operators which do not make obvious the  stability/squeezing thresholds. In the particular case of Paul's potentials with  $\beta(\tau ) = \beta_0 + 2\beta_1 \cos \tau$ the map of the squeezing boundary is determined by the Strutt diagram \cite{bender78}, traditionally limited to describe the ion trapping (in stability areas). Out of them are precisely the {\em squeezing effects} as stated in {\bf III}. 

To illustrate all of this, it is interesting to integrate \eqref{eq:3} for particular case of Paul's potential for $(\beta_0,\beta_1)$ out of the stability domain. Nevertheless, the squeezing cannot occur if $\beta(\tau)$ is symmetric in the operation interval \cite{wolf04, wolf12}, hence, we chose to integrate numerically \eqref{eq:3} for Paul's $\beta=\beta_0+2\beta_1\cos\tau$ in $\left[\frac{\pi}{2},\frac{5\pi}{2}\right]$ and $\beta_0,\beta_1$ varying in the second squeezing area of the Strutt diagram -- {\em cf.} \cite{bender78,mielnik10}. 

Then, we performed the scanning to localize the evolution matrices 
$u$ yielding the position squeezing. The results shown on figure \ref{f:strutt} generalize the numerical data of Ramirez \cite{mielnik10}. The continuous line on figure \ref{f:strutt} represents the $(\beta_0 , \beta_1)$ values for which the evolution matrix $u = u\left(\frac{5\pi}{2}, \frac{\pi}{2}\right)$ has the matrix element $u_{12} = 0$, while the dotted line contains the $(\beta_0,\beta_1)$ where $u_{21} = 0$.

The numbers $\lambda = u_{11}$, define the {\em squeezing} of $q$ completing the data obtained in \cite{mielnik10}. The particular matrices $u_1,u_2,u_3$ and $u_s$ obtained for the collection of pairs
\begin{equation*}
(\beta_0,\beta_1) = \{(1.054,0.646),(1.577,1.231),(1.774,1.454),(1.217,0.844)\},
\end{equation*}
are fully reported in \eqref{eq:9}. The points on the negative parts of the squeezing trajectory (continuous line), represent the inverted {\em squeezing} effects -- {\em cf.} the reinterpreted Strutt map \cite{mielnik10}. 

\begin{figure}[ht]
\begin{center}
\includegraphics[width=7cm]{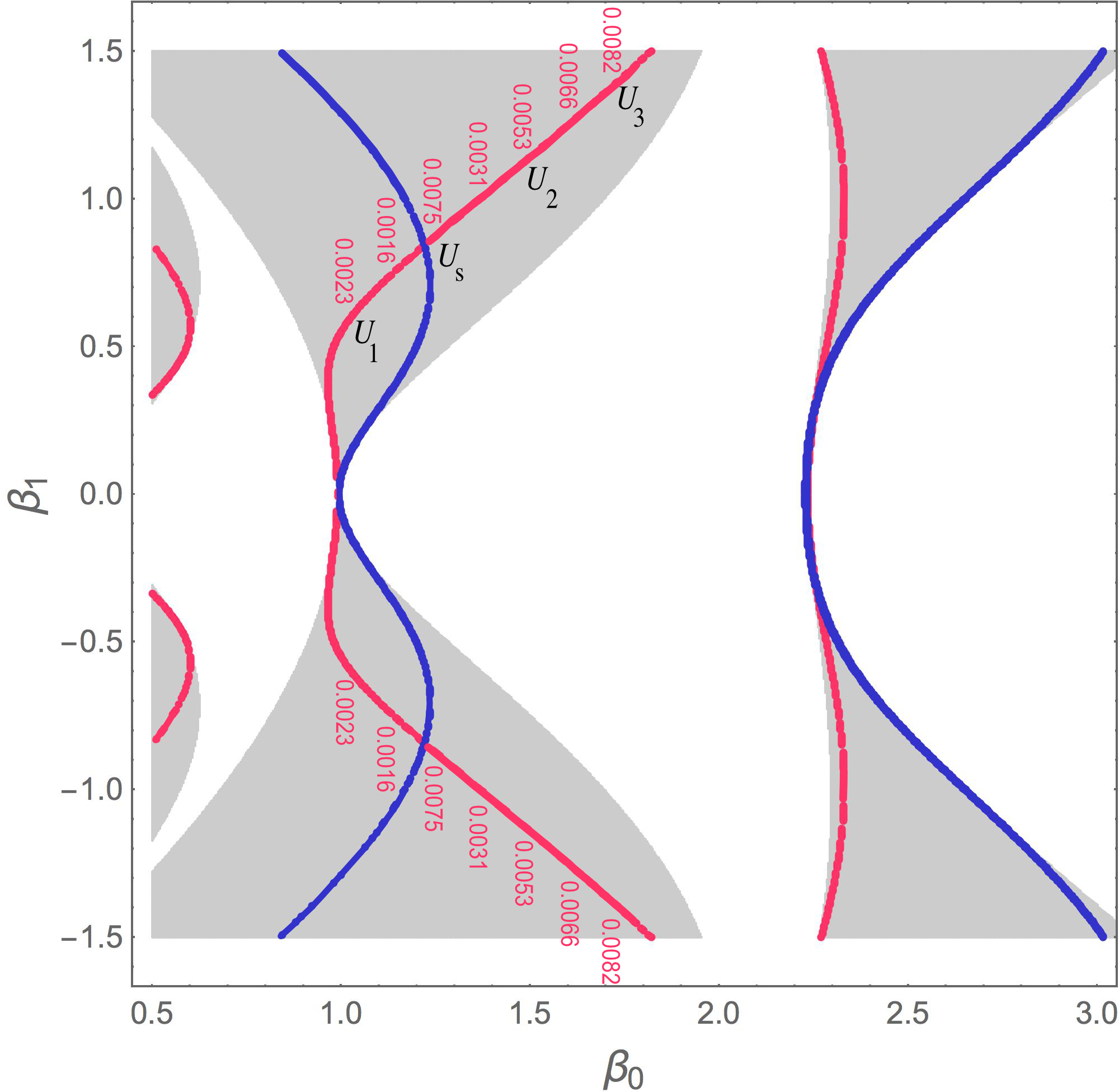}
\caption{The squeezing in Paul's trap. The continuous line collects the evolution matrices \eqref{eq:2} with $u_{12} = 0$ in the second squeezing area on the Strutt map.}
\label{f:strutt}
\end{center}
\end{figure}

Here, the additional data above the continuous line report slightly different effects when $q$ is squeezed (or amplified) at the cost of distinct canonical variables $a^- = u_{21}q + \frac{1}{\lambda} p$. As an example, we picked up the four evolution matrices representing various cases of squeezing granted by four pairs $(\beta_0,\beta_1)$ on figure \ref{f:strutt}. 

\begin{equation}
\label{eq:9}
\begin{split}
u_1  =\begin{Vmatrix}0.362 & 0.002\\-1.114 & 2.751\end{Vmatrix},\quad 
u_2 &=\begin{Vmatrix}0.175 & 0.005\\ 3.501 & 5.798\end{Vmatrix},\quad
u_3  =\begin{Vmatrix}0.216 & 0.008\\ 5.444 & 4.833\end{Vmatrix},\\ 
u_s &=\begin{Vmatrix}0.227 & 0.007\\ 0.000 & 4.394\end{Vmatrix}.
\end{split}
\end{equation}

While the matrix $u_s$ at the intersection of both curves in the upper (positive) part of the diagram represents the {\em squeezing coordinate} $q \rightarrow \lambda q$, $p\rightarrow \frac{1}{\lambda}p$ with  $\lambda \approx 0.227$, the corresponding intersection on the lower (negative) part represents an inverse operation with  $\frac{1}{\lambda} \approx 4.394$, {\em i.e.} amplifying $q$ and squeezing $p$. Henceforth, if the corresponding pulses were successively applied to two pairs of electrodes in a cylindric Paul's trap, then the particle state would suffer the sequence expansions of its $x$ variable with the simultaneous squeezing of $y$ and inversely, amplifying $y$ but squeezing  $x$. An open question is whether some new techniques of squeezing could appear by generalizing the operational techniques of high frequency pulses described in \cite{Itin,Martinez}.

Although analytic expressions in \cite{frenkel01} could yield to more exact results, our computer experiment, in fact, indicates that the phenomena of  $q,p$--squeezing can happen in the Paul's traps. Yet, they concern only the extremely clean Paul oscillations, without any laser cooling (crucial for the experimental  trapping techniques), nor any dissipative perturbations.    
Moreover,  the squeezing effects described by matrices \eqref{eq:9} are volatile, materializing itself only in sharply defined time moments, which makes difficult the observation of the phenomenon in the oscillating trap fields.

\subsection{Dimensions in Brief}

The boring problem of physical units brings, however, some additional data. For the dimensionless $\tau = \omega t$ the parameter  $\footnotesize{T}$ in \eqref{eq:4} 
is the period of the oscillating Paul's voltage on the trap wall  $\Phi(t) = \Phi_0 + \Phi_1 \cos \omega t \Rightarrow \beta(\tau) = \beta_0+2\beta_1\cos\tau$. Hence, the same dimensionless matrix $u_s$ of \eqref{eq:9} can be generated  in $\left[\frac{\pi}{2},\frac{5\pi}{2}\right]$, by physical parameters such that:
\begin{equation}
\label{eq:10}
\frac{e\Phi_0}{\omega^2 r_0^2m}=\beta_0, \quad \frac{e\Phi_1}{\omega^2 r_0^2m}=2\beta_1.
\end{equation}
 
In case of particles with fixed mass and charge, what can vary are the potentials $\Phi$ and the physical time $T=\frac{2\pi}{\footnotesize{\omega}}$ of the operations corresponding to the dimensionless interval $\left[\frac{\pi}{2},\frac{5\pi}{2}\right]$. Hence, for any fixed $r_0$, the smaller $\omega$ (and the longer $T$)  the smaller voltages $\Phi_0$ and $\Phi_1$ are sufficient to assure the same result (but only if too weak fields do not permit the particle to escape or to collide with the trap surfaces). For a proton ($m = m_p \simeq \text{1.67} \times 10^{-24}$g) in an unusually ample ion trap of $r_0 = 10$cm and in a moderately oscillating Paul's field with $\omega$ corresponding to a 3km long radio wave, one would have $\omega^2 r_0^2 m_p \simeq 10^{-12} \text{g}\frac{\text{cm}^2}{s^2} = \text{1.67} \times 10^{-12}\text{g}\frac{\text{cm}^2}{\text{s}^2} \simeq \text{1.04233}$eV, leading to the voltage estimations: 
$\Phi_0 \simeq \text{1.0423}\beta_0 \text{V} \simeq \text{1.268}$V and 
$\Phi_1 \simeq \text{2.0846}\beta_1 \text{V} \simeq \text{1.759}$V. In a still wider trap with $r_0 = 100$cm or, alternatively, with $r_0 =10$cm but the frequency ten times higher, then the voltages needed on the walls should be already one hundred times higher.
%
%
%
%
\section{Squeezed Fourier}
\label{s:fourier}

One of the simplest ways to construct the quantum evolution  operations is to apply  sequences of external  $\delta$--pulses interrupting some continuous evolution process ({\em e.g.} the free evolution, the harmonic oscillation, etc. \cite{fernandez92,ammann97,viola99,viola03}). However,  the method is  obviously limited by the practical impossibility of applying the $\delta$--pulses of the external fields. In case of squeezing, a more regular method could be to compose some evolution incidents which belong to the equilibrium zone {\bf I} but their products do necessarily not. One of the chances is to use the fragments of time independent oscillator fields \eqref{eq:1} with the elastic forces $\beta = \kappa^2 = \text{constant}$, generating the symplectic rotations:
\begin{equation}
\label{sprot}
u=\begin{Vmatrix}\cos\kappa\tau & \frac{\sin\kappa\tau}{\kappa} \\ -\kappa \sin\kappa\tau & \cos\kappa\tau\end{Vmatrix}.
\end{equation}
Their simplest cases obtained for $\cos\kappa\tau=0$ are the {\em squeezed Fourier transformations}
\begin{equation} 
\label{sqF}
u=\begin{Vmatrix}0&\pm\frac{1}{\kappa}\\ \mp \kappa&0\end{Vmatrix}.
\end{equation}
Following the proposal of Fan and Zaidi \cite{fan88}, and Gr\"ubl \cite{grubl89} it is enough to apply two such steps with different $\kappa$-values to generate the evolution matrix
\begin{equation}
\label{squeeze}
u_\lambda = \begin{Vmatrix}0&\pm\frac{1}{\kappa_1}\\ \mp\kappa_1&0\end{Vmatrix}\begin{Vmatrix}0&\pm\frac{1}{\kappa_2}\\ \mp\kappa_2&0\end{Vmatrix}=\begin{Vmatrix}\lambda&0\\ 0&\frac{1}{\lambda}\end{Vmatrix}; \quad \lambda = -\frac{\kappa_2}{\kappa_1}
\end{equation}
which produces the squeezing of the canonical pair: $q \rightarrow \lambda q\ ,p \rightarrow\frac{1}{\lambda}p$, with the effective evolution operator: 
\begin{equation*}
U_\lambda = \exp\left[-\mathrm{i}\sigma \frac{pq+qp}{2}\right], \quad \sigma = \ln\lambda. 
\end{equation*}
It requires, though, two different  $\kappa_1 \neq\kappa_2$ in two different time intervals divided by a sudden potential jump. (Here, the times $\tau_1$ and $\tau_2$ can fulfil {\em e.g.} $\kappa_1\tau_1 = \kappa_2\tau_2 = \frac{\pi}{2}$ to assure that both $\kappa_1$ and $\kappa_2$ grant two distinct squeezed Fourier operations in their time intervals.) If one wants to apply two potential steps on the null background, it means at least three jumps ($0 \rightarrow \kappa_1 \rightarrow \kappa_2 \rightarrow 0$). How exactly can one approximate a jump of the elastic potential? Moreover, each $\kappa$-jump implies an energy transfer to the micro-particle \cite{grubl89}. So, could the pair of generalized Fourier operations in \eqref{squeeze} be superposed in a {\em soft way} with an identical end result? In fact, the recent progress in the inverse evolution problem shows the existence of such effects.
%
%
%
%
\section{Devising Exact Solutions}
\label{s:exact}

Though the exact expressions \eqref{squeeze} were already known, it was not noticed that they can be generated by the simple anti--commuting {\em Toeplitz algebra} of $2\times2$ {\em equidiagonal symplectic matrices} $u$ with $u_{11} = u_{22} = \frac{1}{2} \text{Tr}~u$. It turns out that for any two such matrices $u, v$  their anti-commutator $uv+vu$ as well as  the symmetric products $uvu$ and $vuv$ belong to the same family. {\em Toeplitz matrices} have inspired a lot of research (see \cite{bottcher00, trefethen05, Deift} and the references therein)  though apparently, without paying  attention to their simplest quantum control sense. In our case, even without eliminating  jumps in \eqref{squeeze} they give an additional flexibility in constructing the squeezed Fourier operations as the symmetric products of many little symplectic contributions 
\eqref{sprot} with different $\beta$'s acting in different time intervals. Thus, by using the fragments of the symplectic rotations $v_k$ caused by the Hamiltonians \eqref{eq:1} with some fixed $\beta = \beta_i$ in time intervals $\Delta\tau_i$, with $i=0,1,2,\ldots$, one can define the symmetric product
\begin{equation}
\label{eq:12}
u=v_n\cdots v_1v_0v_1\cdots v_n,
\end{equation}
again, symplectic and {\em equidiagonal} ({\em i.e.} of the simplest Toeplitz class), with $u_{11} = u_{22} = \frac{1}{2}\text{Tr}~u$. Whenever \eqref{eq:12} achieves $\text{Tr}~u = 0$, the matrix $u$ becomes  {\em squeezed Fourier}. The continuous equivalents can be readily obtained. Indeed, it is
enough to assume that the amplitude $\beta(\tau)$ is symmetric around a certain  point $\tau = 0$, {\em i.e.}, $\beta(\tau) = \beta(-\tau)$. By considering the limits of little jumps $\mathrm{d}u$ caused by applying the contributions $\mathrm{d}v = \Lambda(\tau)\mathrm{d}\tau$  from the left and right sides, one arrives at the differential equation for $u = u(\tau,-\tau)$ in the expanding  interval $[-\tau, \tau]$, that is
\begin{equation}
\label{eq:13}
\frac{\text{d}u}{\text{d}\tau} = \Lambda(\tau) u+u\Lambda(\tau).
\end{equation} 
Certainly, the matrix $\Lambda(\tau)$ might read as expressed in \eqref{eq:3}, nevertheless, we can consider a slightly more general fashion of it for a better insight of our problem, {\em i.e.}
\begin{equation*}
\Lambda(\tau) = \begin{Vmatrix} 0 & \gamma(\tau) \\ -\beta(\tau) & 0\end{Vmatrix},
\end{equation*}
where $\gamma(\tau)$ is, in principle, an arbitrary function but chosen carefully so as to achieve the desired squeezing effect.

Now, because of the anti--commuting form of \eqref{eq:13}, we can obtain easily an exact solution to the equation
\begin{equation}
\label{eq:14}
\frac{\text{d}u}{\text{d}\tau} = (u_{21}\gamma - u_{12}\beta)\mathbb{I} + \text{Tr}~u\,\Lambda.
\end{equation}
In case of symmetric $\beta$, this determines explicitly the matrices $u = u(\tau,-\tau)$ for the expanding $[-\tau,\tau]$ in terms of just one function $\theta(\tau) \equiv u_{12}(\tau,-\tau)$. In fact, since \eqref{eq:14} implies the same differential equation for $u_{11}$ and $u_{22}$, {\em i.e.} $\frac{\text{d}}{\text{d}\tau}u_{11}=\frac{\text{d}}{\text{d}\tau} u_{22} =u_{21}\gamma-u_{12}\beta$, and since $u_{11} = u_{22} = 1$ at $\tau = 0$, then $u_{11} = u_{22} = \frac{1}{2}\gamma\text{Tr}~u = \frac{1}{2}\theta'$ for $u$ in any symmetric interval, $[-\tau,\tau]$. Moreover, since $u = u(\tau,-\tau)$ are symplectic, {\em i.e.} $\text{Det}~u = \left[\frac{1}{2}\theta'\right]^2- u_{21}\theta = 1$, one obtains
\begin{equation}
\label{u21}
u_{21}=\frac{\left[\frac{1}{2}\theta'\right]^2-1}{\theta}.
\end{equation}
Hence, \eqref{eq:14} defines the amplitude $\beta(\tau)$ which has to be applied to create the matrices $u=u(\tau,-\tau)$. Indeed 
\begin{equation}
\label{eq:15}
u_{12}\beta = u_{21}\gamma -\frac{d}{d\tau}u_{11}
\end{equation}
and since $u_{12}=\theta$, then $\frac{d}{d\tau}u_{11}=\frac{\theta''}{2}$. Thus $u_{21}$ is given by \eqref{u21} and consequently,
\begin{equation}
\label{beta} 
\beta = \gamma \frac{\left[\frac{1}{2}\theta'\right]^2-1}{\theta^2} -\frac{\theta''}{2\theta}.
\end{equation}

This solves  the  symmetric evolution problem for $u$ and $\beta$ in any interval $[-\tau,\tau]$ in terms of an almost arbitrary function $\theta(\tau)$, restricted by non--trivial conditions in single points only. Hence, \eqref{beta} is indeed an exact solution of the  inverse evolution problem, offering $\beta(\tau)$ in terms of the function $\theta(\tau) = u_{12}(\tau,-\tau)$  representing the evolution matrices for the expanding (or shrinking) evolution intervals $[\tau,-\tau]$. 
Note though that the dependence of $u(\tau, \tau_0)$ on $\beta(\tau)$ given by \eqref{beta} in any non--symmetric interval $[\tau, \tau_0]$  still requires an additional integration of \eqref{eq:3} between $\tau_0$ and $\tau$. Some simple algebraic relations of $\beta,\gamma$ and $\theta$ are worth attention.

\begin{theorem} Suppose $\beta(\tau)$ is given by \eqref{beta}, in a certain interval $[-\text{\footnotesize{T}},\text{\footnotesize{T}}]$, where $\theta(\tau)$ is continuous and at least three times differentiable. The conditions which assure the continuity, differentiability of $\beta$ and the dynamical 
relations between $\beta,\gamma$ and $\theta$ are then:
\begin{enumerate}
\item[1] At any point $\tau$ where $\theta(\tau) = 0$, there must be $\theta'(\tau) = \pm2$.
\item[2] If, moreover, $\theta'''(\tau) = 0$ then also $\beta'(\tau) = 0$.
\item[3] At any point $\tau$ where $\theta(\tau) \neq 0$ but $\theta'(\tau)=0$, the matrix  \eqref{eq:14} for $[-\tau,\tau]$ represents the squeezed Fourier transformation with $\beta(\tau)$ at the end points  given by  $-\beta(\tau)\theta^2=\frac{1}{2}\theta''\theta + \gamma(\tau)$.
\end{enumerate} 
\end{theorem}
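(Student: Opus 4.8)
The plan is to extract all three items directly from the closed form \eqref{beta}, combined with the matrix entries already established before the statement, namely $u_{11}=u_{22}=\tfrac12\theta'$, $u_{12}=\theta$ and $u_{21}=\big((\tfrac12\theta')^2-1\big)/\theta$ (from $\det u=1$), together with the identity \eqref{eq:15}, i.e. $u_{12}\beta=u_{21}\gamma-\tfrac{d}{d\tau}u_{11}$. Since $\theta$ enters \eqref{beta} only through $\theta,\theta',\theta''$ and quotients by powers of $\theta$, everything is local behaviour of $\beta$ at the zeros of $\theta$ (items 1 and 2) and at the critical points of $\theta$ (item 3). For item 1, fix $\tau_0$ with $\theta(\tau_0)=0$. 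Since $u(\tau_0,-\tau_0)$ is a genuine symplectic matrix of the simplest Toeplitz (equidiagonal) type, $\det u=u_{11}u_{22}-u_{12}u_{21}=1$ with $u_{11}=u_{22}$ and $u_{12}=\theta(\tau_0)=0$; hence $u_{11}(\tau_0)^2=1$, and with $u_{11}=\tfrac12\theta'$ this is exactly $\theta'(\tau_0)=\pm2$. (Equivalently: the numerator of \eqref{beta}, $\gamma\big[(\tfrac12\theta')^2-1\big]-\tfrac12\theta\theta''$, must vanish at $\tau_0$ for $\beta$ to stay finite, which forces $(\tfrac12\theta'(\tau_0))^2=1$ since $\gamma(\tau_0)\neq0$.) By L'Hôpital one also records $u_{21}(\tau_0)=\tfrac12\theta''(\tau_0)$, which is used next.

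For item 3, fix $\tau_1$ with $\theta(\tau_1)\neq0$ but $\theta'(\tau_1)=0$. Then $u_{11}(\tau_1)=\tfrac12\theta'(\tau_1)=0$, so $\operatorname{Tr}u(\tau_1,-\tau_1)=0$; an equidiagonal symplectic matrix with vanishing trace has $u_{11}=u_{22}=0$ and, from $\det u=1$, $u_{21}=-1/u_{12}=-1/\theta(\tau_1)$. Thus $u(\tau_1,-\tau_1)=\begin{Vmatrix}0&\theta(\tau_1)\\-1/\theta(\tau_1)&0\end{Vmatrix}$, which is precisely the \emph{squeezed Fourier} form \eqref{sqF} with $\kappa=1/\theta(\tau_1)$. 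The stated relation for $\beta$ at the end points is then immediate: setting $\theta'(\tau_1)=0$ in \eqref{beta} gives $\beta(\tau_1)=-\gamma(\tau_1)/\theta(\tau_1)^2-\theta''(\tau_1)/(2\theta(\tau_1))$, and multiplying by $-\theta(\tau_1)^2$ yields $-\beta(\tau_1)\theta(\tau_1)^2=\tfrac12\theta''(\tau_1)\theta(\tau_1)+\gamma(\tau_1)$.

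Item 2 is where the real work lies. Back at a zero $\tau_0$ of $\theta$ (so $\theta'(\tau_0)=\pm2$ by item 1), I would treat $\beta\theta^2=N$, $N:=\gamma\big[(\tfrac12\theta')^2-1\big]-\tfrac12\theta\theta''$, as a polynomial identity, differentiate it repeatedly and evaluate at $\tau_0$. One finds $N(\tau_0)=0$ and, in the principal case $\gamma\equiv1$ of \eqref{eq:1} where $N'=-\tfrac12\theta\theta'''$, also $N'(\tau_0)=0$ automatically, while $N''(\tau_0)=-\tfrac12\theta'(\tau_0)\theta'''(\tau_0)$ is a nonzero multiple of $\theta'''(\tau_0)$; dividing the Taylor expansion of $N$ by that of $\theta^2=4(\tau-\tau_0)^2+\cdots$ then shows that $\theta'''(\tau_0)=0$ makes $\beta$ vanish at $\tau_0$, and carrying one more order pins down $\beta'(\tau_0)$. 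The main obstacle is exactly this bookkeeping of cancellations: one must expand to high enough order in $(\tau-\tau_0)$ and be honest about the smoothness actually available (the clean statement needs $\gamma$ locally constant; for general $\gamma$ one gets instead $\beta(\tau_0)\propto\gamma'(\tau_0)\theta''(\tau_0)-\tfrac12\theta'''(\tau_0)$, and item 2 should be read with this understood). Items 1 and 3 are short once \eqref{beta} and the explicit entries are in hand; the only genuine computation is this local expansion in item 2.
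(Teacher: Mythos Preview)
Your arguments for items 1 and 3 are exactly what the paper does: it notes that $u_{11}=u_{22}=\tfrac12\theta'$ and then reads off the consequences from $\det u=1$ at a zero of $\theta$ (item 1) and at a zero of $\theta'$ (item 3), finishing item 3 by specializing \eqref{beta}. There is no real difference in route here.

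For item 2 you go well beyond the paper, which dispatches all three items with the phrase ``follows straightforwardly by applying \eqref{eq:15}'' and then spells out only item 3. Your local expansion of $N=\gamma\big[(\tfrac12\theta')^2-1\big]-\tfrac12\theta\theta''$ at a zero of $\theta$ is the right mechanism, and your observation that in the principal case $\gamma\equiv1$ one has $N'=-\tfrac12\theta\theta'''$, hence $N(\tau_0)=N'(\tau_0)=0$ automatically and $N''(\tau_0)\propto\theta'''(\tau_0)$, is correct and more informative than the paper's sketch. Your caveat that for non-constant $\gamma$ the leading term picks up a $\gamma'(\tau_0)\theta''(\tau_0)$ contribution is also well taken; the paper does not address this. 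One small point: carrying the expansion to the next order to actually read off $\beta'(\tau_0)$ uses $\theta^{(4)}$, which slightly overshoots the ``at least three times differentiable'' hypothesis; the paper does not confront this either, so you are simply being more honest about what is needed.
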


\begin{proof}[Proof] It follows straightforwardly by applying \eqref{eq:15}. In particular, since \eqref{eq:14} and the initial condition grants $u_{11}=u_{22}=\frac{1}{2}\theta'(\tau)$ then, whenever $\theta'=0$, both $u_{11}=u_{22}=0$ implying $u_{12}=b\neq0, u_{21}=-\frac{1}{b}$; which is the general form of the {\em squeezed Fourier} transformation.  Simultaneously, \eqref{beta} simplifies and the value of $\beta(\tau)$ fulfils $-\beta(\tau)\theta^2=\frac{1}{2}\theta''\theta +\gamma(\tau) \Rightarrow b\beta(\tau) + \frac{\theta''}{2} +\frac{1}{b}\gamma(\tau)=0$. In particular, if  $b\theta''(\tau) = -2$, then $\beta(\tau) = 0$.
\end{proof}

Here, we have used the simplest non--trivial case of Toeplitz algebra. This solves the inverse evolution problem for $\beta(\tau)=\kappa^2(\tau)$ in terms of $\theta(\tau)$ and $\gamma(\tau)$, without any auxiliary invariants. However, its purely comparative sense should be highlighted. For a fixed pair of canonical variables $q,p$ it does not give the causally progressing process of the evolution, but rather compares the evolution incidents in a family of expanding  intervals $[-\tau, \tau]$.  Should one like to follow the causal development of the classical/quantum systems, the Ermakov--Milne equation \cite{ermakov08,milne30} might be useful. An interrelation between both methods waits still for an exact description. It is not excluded that the anti--commutator algebras can help also in some higher dimensional canonical problems.\footnote{It seems truly puzzling that this extremely simple case of anti--commuting Toeplitz algebra was never associated with the variable oscillator evolution.}
%
%
%
%
\section{Testing the Algorithm}
\label{s:test}

As already checked in \cite{mielnik13}, there exist polynomial models of $\theta(\tau)$  making possible the soft generation of the squeezed Fourier transformations. The polynomials of $\tau$, however, are just a formal exercise. As it seems, it would be more natural to apply the harmonically oscillating $\theta$ functions. As the most elementary case, let us consider the evolution guided by $\theta$ with only four frequencies. In dimensionless variables we have:
\begin{equation}
\label{eq:16}
\theta(\tau) = a_1 \sin \tau + a_3 \sin{3\tau} + a_5 \sin5\tau + a_7 \sin7\tau.
\end{equation}

Note that for $\theta(\tau)$ antisymmetric, the corresponding $\beta(\tau)$ defined by \eqref{eq:15} is symmetric around $\tau=0$. In order to generate the soft squeezed Fourier operations with $u_{12}= \pm \kappa = b$ at the ends of the symmetric interval $\left[-\frac{\pi}{2},\frac{\pi}{2}\right]$, the function $\theta$ given by \eqref{eq:16} must satisfy the conditions of Lemma 2, thus we obtain the following relations:
\begin{equation}
\label{eq:17}
\begin{split}
&\theta \left(\frac{\pi}{2}\right) = a_1 - a_3 + a_5 -a_7= b,\\
&\theta' (0) = a_1 +3 a_3 + 5a_5 +7a_7= 2,\\
&\theta'' \left(\frac{\pi}{2}\right) = -a_1 +9 a_3 - 25 a_5 = -\frac{2}{b}\gamma -2b\beta_0, \\
&\theta''' (0) = a_1 + 27 a_3 + 125a_5 + 343 a_7 = -c.
\end{split}
\end{equation}

The interrelation between the harmonic $\theta$ in \eqref{eq:16} and the corresponding physical $\beta$ given by \eqref{beta} is not completely trivial, but reduces to a purely algebraic problem, where the first identity grants the non-singularity of $\beta$ at $\tau=0$, the second one defines the magnitude $b$ of the {\em Fourier squeezing} depending on the whole trajectory, and $\beta_0$ and $\gamma$ define the symmetric values of the  amplitude $\beta(\tau)$ at $\pm\frac{\pi}{2}$. Equations \eqref{eq:17} are then fulfilled by:
\begin{equation}
\label{eq:18}
\begin{split}
a_{1} &= \frac{b [58 + c + 5 b (21 - 2 \beta_{0})] - 10 \gamma}{128 b}, \\
a_{3} &= \frac{b [74 + c -   b (35 + 2 \beta_{0})] - 2  \gamma}{128 b}, \\
a_{5} &= \frac{b [22 - c + 3 b (-7 + 6 \beta_{0})] + 18 \gamma}{384 b}, \\
a_{7} &=-\frac{b [26 + c + 3 b (-5 + 2 \beta_{0})] + 6  \gamma}{384 b}, 
\end{split}
\end{equation}
with $b,c\neq 0$ two suitable real constants. Note that our assumed (antisymmetric) $\theta(\tau)$ represents  $u_{12}(-\tau,\tau)$ for $\tau>0$, while the obtained (symmetric) $\beta(\tau)$ defines the field amplitude in the whole symmetry interval. 

\subsection{The simplest cases}

At this stage, we are able to sketch some examples of the amplitudes following the remarks previously discussed.  

For instance, in regard to the left panel of figure \ref{fig:2}, with vanishing initial value $\beta_0$, both solid and dashed curves (which do not cross to the negative values) are adequate to achieve the squeezed Fourier operations by time dependent magnetic fields with $B(t)\sim\sqrt{\beta(t)}$. These ones, if superposed softly in two consecutive intervals $[-\frac{\pi}{2},\frac{\pi}{2}]$ and $[\frac{\pi}{2},\frac{3\pi}{2}]$ can generate the $q,p$--squeezing effect (see \cite{squeezing}). However, as shown in figure \ref{fig:3}, we can also  achieve a squeezing effect by the application of a single amplitude along the whole interval.

In a similar fashion, for the right panel of figure \ref{fig:2}, both solid and dashed pulses grant the magnetic squeezed Fourier in their first action interval $[-\frac{\pi}{2},\frac{\pi}{2}]$. In the next $\tau$--interval $\left[\frac{\pi}{2}, \sqrt{ \frac{5}{2}} \pi \right]$ they all reduce themselves to the constant $\beta_0$ generating the same squeezed Fourier with $b_0=\frac{1}{10}$. The two of these fragments correspond to the stability zone {\bf I}, and together produce the amplification $\lambda\simeq -1.71$, {\em i.e.} if composed softly – refer to \cite{squeezing} for further examples.

The amplitudes which cross by zero are not suitable to generate the squeezed (or amplification) effect, rather, one can take advantage of these in case of ion traps. Another approach to evaluate those amplitudes suitable to generate {\em squeezing} is briefly discussed in the appendix \ref{s:appendix}.

\begin{figure}[h!]
\begin{center}
$\vcenter{\hbox{\includegraphics[width=6.25cm]{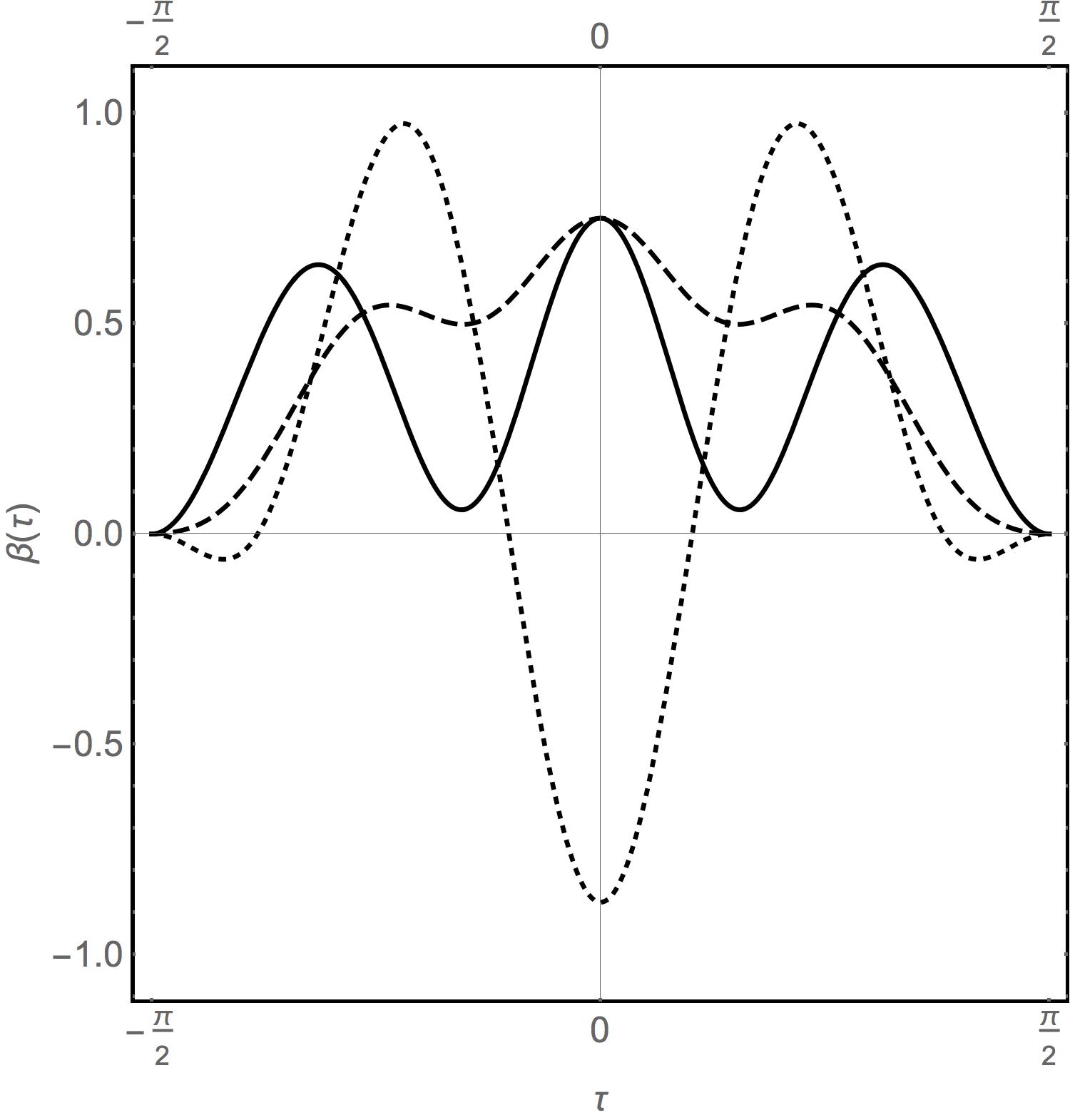}}}$
$\vcenter{\hbox{\includegraphics[width=6.25cm]{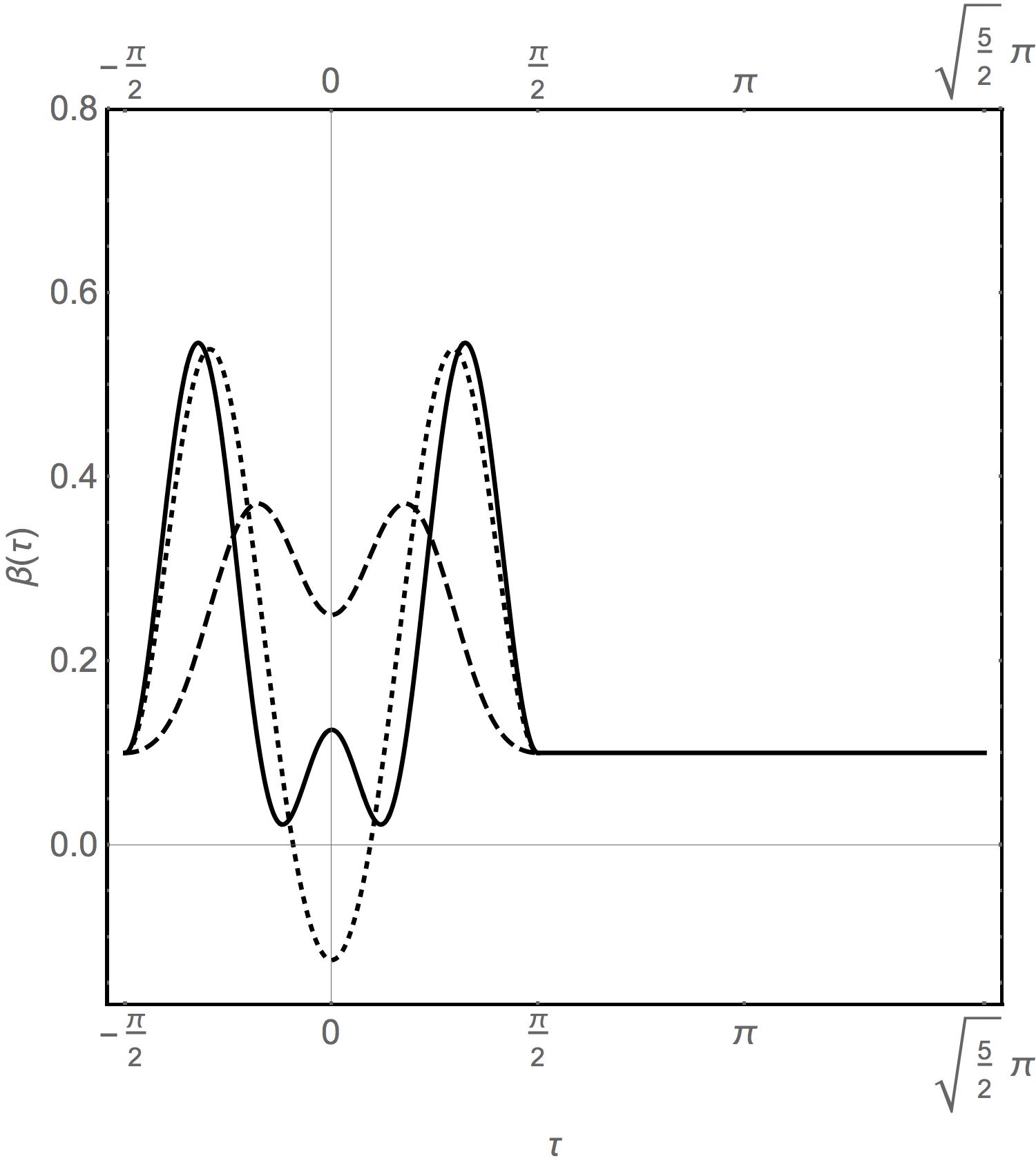}}}$
\\
$\vcenter{\hbox{\footnotesize (a) \hspace{5.5cm} (b)}}$
\caption{(a) The form of three symmetric $\beta$--amplitudes vanishing softly at the borders of the interval $[-\frac{\pi}{2},\frac{\pi}{2}]$ with $\beta_{0}=0$, $\gamma=\sin^{2}\tau$ and: $b=2, c=-3$ (solid),  $b=\frac{7}{4}, c=-3$ (dashed), $b=\frac{9}{5}, c=\frac{7}{2}$ (dotted). (b) The $\beta$--amplitudes satisfying \eqref{eq:17} with $\beta_0=\frac{1}{10}$, $\gamma$=1 and: $b=\frac{43}{20}, c=-1$  (solid), $b=\frac{37}{20}, c=-2$ (dashed) and $b=\frac{43}{20}, c=1$ (dotted).} 
\label{fig:2}
\end{center}
\end{figure}

We deliberately chose the case of partial $\beta$--amplitudes starting and ending up with $\beta(-\frac{\pi}{2})=\beta(\frac{\pi}{2})=\beta(\frac{3\pi}{2})=0$ to illustrate the flexibility of the method. In fact, we could notice that some programs of frictionless driving seem to exclude the  continuity  at the beginning and  at the end of the transport operation or even assume some sharp steps in the interior. Thus, in an interesting report by X. Chen {\em et al.} \cite{muga11} the authors present an operation modifying the harmonic oscillator $H_0$ by adding  some perturbation $H_1$, which vanishes before and after the operation, it can also appear or disappear suddenly -- likewise in \cite{muga10}. However, the interruption of an adiabatic process by a new potential  which can suddenly  {\em jump to existence} might be good to achieve the speed and efficiency of frictionless driving but  not the adiabatic qualities (see the results of Gr\"ubl \cite{grubl89}).

\subsection{Uncertainty shadows}

Certainly, the $\beta$--pulses determined by \eqref{eq:16}--\eqref{eq:18} do not yet define the actual trajectory inside the whole interval ({\em i.e.} for $\tau\neq -\tau_0,\tau $), which must be determined by a separate computer simulation – see \cite{algorithms}.

With this aim, we integrated the matrix equation \eqref{eq:3} for $u(\tau,\tau_0)$ subject to the initial condition $u(\tau_0,\tau_0)=\mathbb{I}$, where $\tau_0=-\frac{\pi}{2}$ starts the  evolution interval. We then continued the integration up to $\tau = \frac{35\pi}{32}$ obtaining a family of $2\times 2$ evolution matrices which draw a congruence of trajectories along the whole evolution interval, {\em cf.} figure \ref{fig:3}(a). As one can see, the trajectories indeed paint an image of the pair of squeezed Fourier in the whole interval and the coordinate squeezing at the very end. The final result is the $q$--amplification with $\lambda \approx -\text{1.14}$ corresponding to the solid $\beta$--amplitude on figure \ref{fig:2}. If generated by a magnetic field in a cylindrical solenoid, it would mean the $\lambda$--expansion of both coordinates $q=x,y$. Thus, if the operation is performed for an initial Gaussian packet, in a cylindric Paul's trap or solenoid of dimensionless radius large enough ({\em e.g.} $r_0>10$), this means that there exists a little probability that the particle collides with the trap or solenoid wall.

\begin{figure}[ht]
\begin{center}
$\vcenter{\hbox{\includegraphics[width=6.25cm]{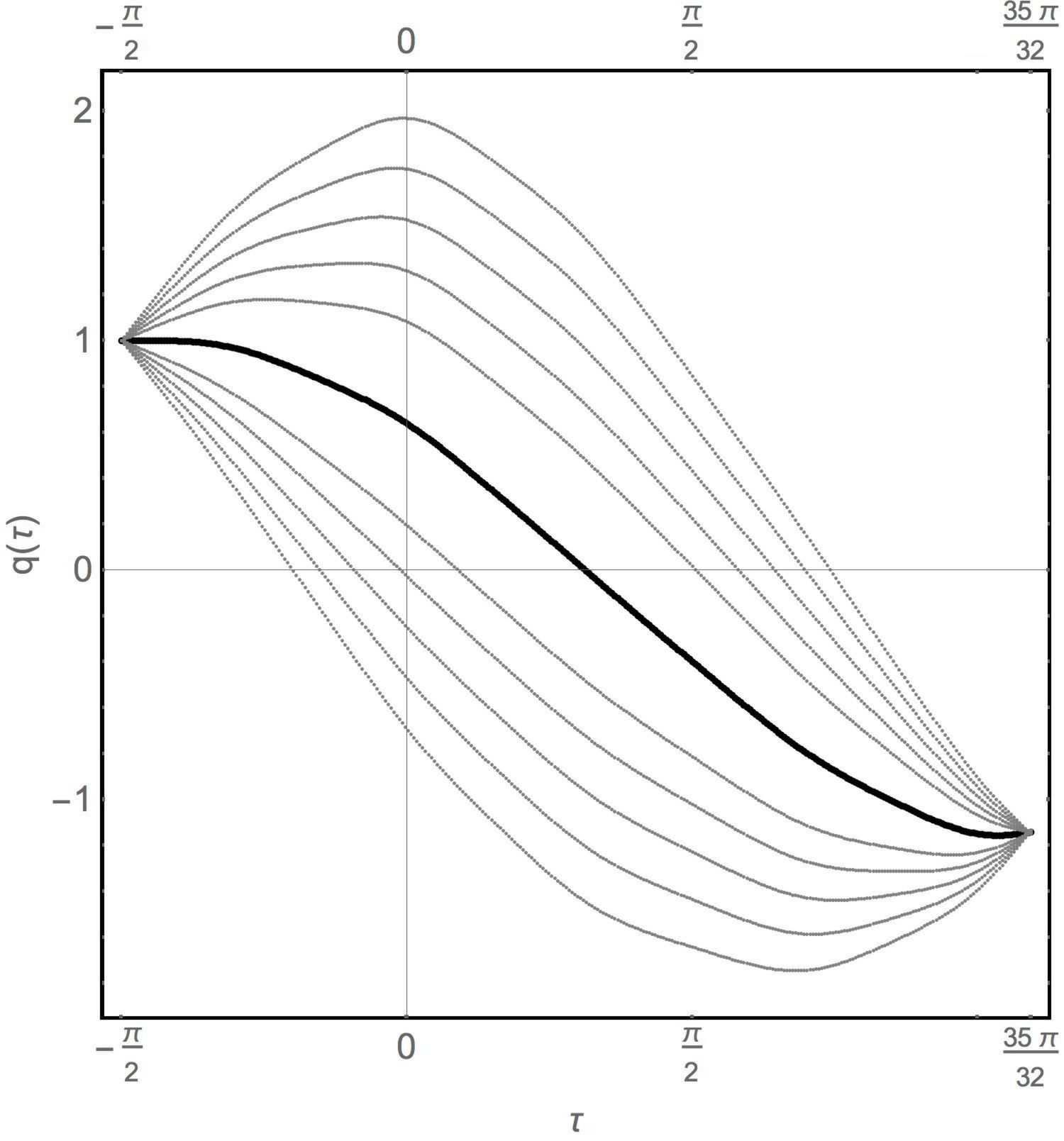}}}$
$\vcenter{\hbox{\includegraphics[width=6.25cm]{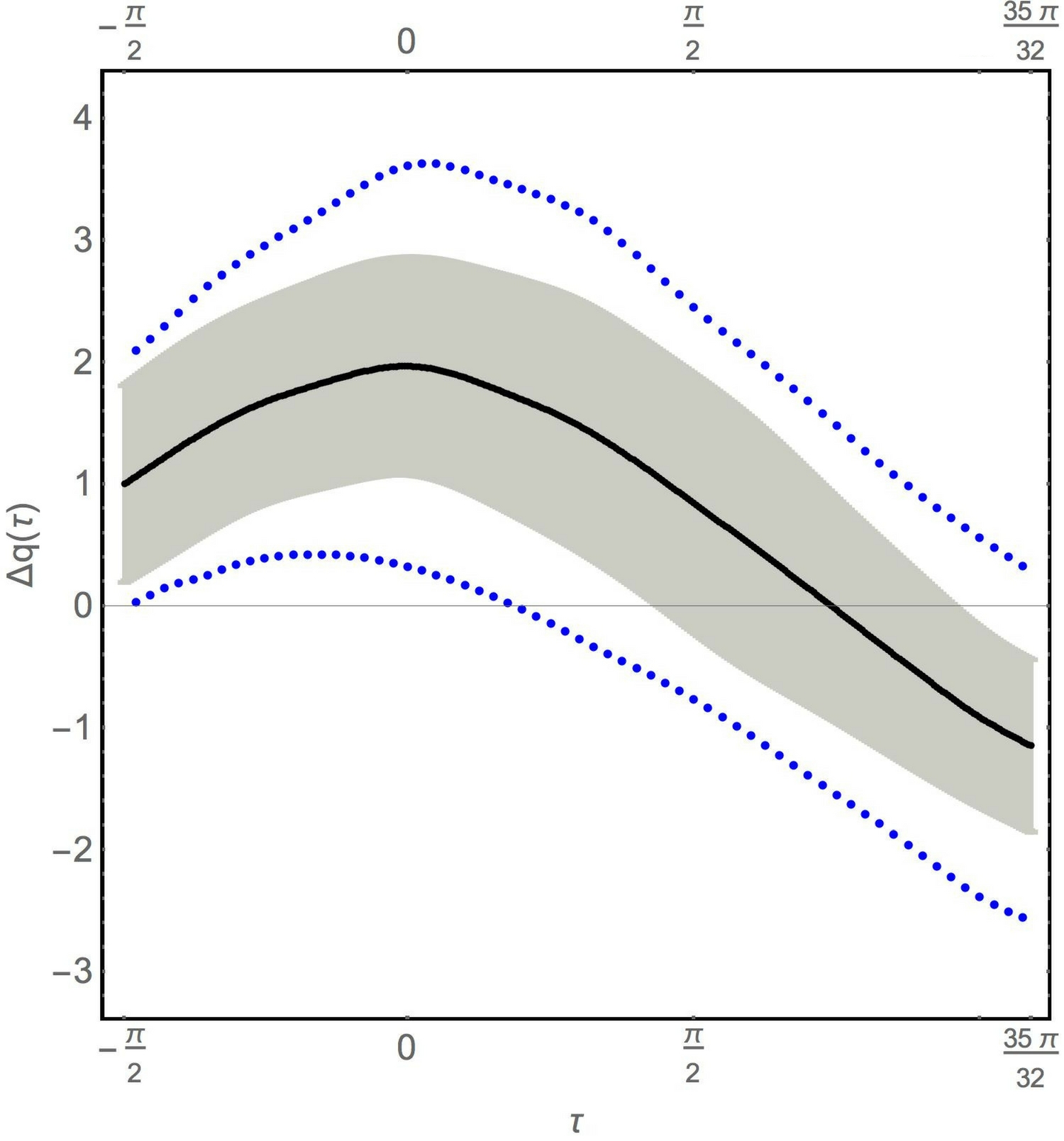}}}$
\\
$\vcenter{\hbox{\footnotesize (a) \hspace{5.5cm} (b)}}$
\caption{(a)  The trajectories generated by the single application of a squeezed Fourier operation induced by the solid $\beta$--amplitude of figure \ref{fig:2}(a)  along the asymmetric interval $[-\frac{\pi}{2},\frac{35}{32}\pi]$. (b) The uncertainty shadow $\Delta q$ for the upper trajectory sketched on part (a), determined for the numerically calculated $u_{11}$ and $u_{12}$ according to \eqref{eq:3} with the initial values $\Delta q =\Delta p = \frac{1}{2}$, marking the 0.999 threshold of packet probability.}
\label{fig:3}
\end{center}
\end{figure}

The above time dependent family  $u(\tau,\tau_0)$  in the interval  $\left[-\frac{\pi}{2},\frac{35\pi}{32}\right]$,  permits also to observe the going of the position and momentum uncertainties on the trajectory. As an example, we took one of the the most elementary Gaussian wave functions in $L^2(\mathbb{R})$ centred at $x = q_0$ with initial velocity $p_0$, that is
\begin{equation}
\label{eq:19}
\Psi(x,0) = A \exp\left[ i p_0 (x-q_0) \right]\exp\left[-\kappa \frac{(x-q_0)^2}{2} \right], \quad A=\left(\frac{\kappa}{\pi}\right)^{\frac{1}{4}}.
\end{equation}
For $\kappa=1= q_0=1$, and for varying $p_0$  the packet centre will draw exactly the family of trajectories on figure \ref{fig:3}(a) and the simple calculation with the initial uncertainties $(\Delta q)^2 = (\Delta p)^2 = \frac{1}{2}$, leads to:
\begin{equation}
\label{eq:22}
\vert \Delta q(\tau) \vert^2 = \frac{1}{2}\left[u_{11}^2(\tau) + u_{12}^2(\tau)\right].
\end{equation}

We then used the square root of \eqref{eq:22} to correct the upper trajectory of figure \ref{fig:3}(a) ($p_0=1$) by its {\em uncertainty shadow}, as shown on figure \ref{fig:3}(b). These results suggests that the main part of the evolving packet is contained within a wider dimensionless belt, {\em e.g.}  $\vert q (\tau)\vert < 10$ in the whole evolution interval $\left[-\frac{\pi}{2},\frac{35\pi}{32}\right]$. Characteristically, the uncertainty effects are most visible in the middle of the trajectory, for $\tau=\frac{\pi}{2}$ where two distinct squeezed Fourier meet, but they stick to the final amplified state at $\tau=\frac{35\pi}{32}$.


In fact, data on $\Delta q$ collected from figure \ref{fig:3} can additionally provide the more detailed statistical information. Our initial packet is not an eigenstate of any instantaneous Hamiltonian \eqref{eq:1}, but it is Gaussian and so are the transformed states $\Psi (x,\tau)$.

Thanks to the classical--quantum duality, the evolution matrix $u(\tau,\tau_0)$ determines also the evolved quantum state  $\Psi$. Some difficulty consists only in expressing $\Psi$ exclusively in $x$-representation. Yet, generalizing the already known results \cite{cohen}, we could obtain an explicit expression for the $x$-probability density $|\Psi(x,\tau)|^2$ at an arbitrary $\tau$, namely
\begin{equation}
\label{probability}
|\Psi(x,\tau)|^2=\frac{1}{\sqrt{\pi} \Delta q}\text{exp}{-\frac{(x-\langle q\rangle)^2}{|\Delta q(\tau)|^2}}=\frac{\sqrt{2}}{ \sqrt{\pi \left[u_{11}^2 + u_{12}^2\right]}}\text{exp}{-2\frac{(x-u_{11})^2}{u_{11}^2 + u_{12}^2}},
\end{equation}
which can be compared with the formula (17), complement G1 in \cite{cohen}, for the free packet propagation. Our hypothesis is, that our formula \eqref{probability}, not limited to the free packets, is the next step permitting to express the probabilities for the Gaussian states in terms of matrices $u_{kl}$  in all cases of time dependent elastic forces.

{\bf  Remark 1}. Our construction differs slightly from the other ones used to generate the squeezing by time dependent oscillator potentials. Up to now, the algorithms for the soft squeezing were designed for non-vanishing initial and final values $\beta=\beta_0>0$ and $\gamma=\sin^2\tau$. Our development somehow, permits to avoid the difficulty, since it allows $\beta = 0$ at the beginning and at the end.  However, the construction deduced from  \eqref{eq:16} allows to generate the same effects for arbitrary initial and final values of $\beta$. As an example, we quote below an analogous case for a non-vanishing pair of initial and final $\beta$ values.  

We therefore checked that the squeezing of the wave packets can be as  exactly induced by just modifying the orthodox harmonic potential to produce the squeezed Fourier effect in  the operation interval $[-\frac{\pi}{2},\frac{\pi}{2}]$, conserving the same $\beta=\beta_0>0$ at both ends. 



{\bf Remark 2}. A certain surprise are the extremely delicate values of the squeezing effects and the corresponding electric and magnetic fields -- see the orders of magnitude in \cite{squeezing}. Can so weak interactions keep the particle and dictate its unitary transformations? Without entering deeper into the  discussion, let us only notice that the extremely weak fields could be of importance even in our own existence \cite{penrose94,hagan02,frixione14}.

%
%
%
%
\section{The Fundamental Aspects in Little}
\label{s:finale}
In spite of imperfections, we feel attracted into a kind of {\em what if story}. The problem is, whether the existing difficulties to achieve the squeezing are purely technical or they mean some fundamental barrier. If no barrier exists, and the operations can indeed be performed (or at least approximated), the implications could be of some deeper interest.

If the squeezing of the wave packets in $L^2(\mathbb{R}^n)$ defined by $\psi(x) = (\sqrt{\lambda})^{-n} \psi\left(\frac{x}{\lambda}\right)$ (in the lowest dimensions $n = 1, 2, 3)$ with $\vert\lambda\vert <1$ could be achieved as a unitary evolution operation, it would imply that no fundamental limits exist to the possibility of shrinking the particle in an arbitrarily small interval (surface, or volume). 

Some authors believe that such localization must fail at extremely small scale below Planck distance. Certainly, it is difficult to dismiss {\em a priori} the doubts. However, in many fundamental discussions, the {\em Planck distance} is used as a magic spell which permits one to formulate almost any hypothesis free of consistent conditions. Yet, if one truly believes that quantum mechanics is a linear theory, then even the most concentrated wave packets are just the linear combinations of the extended ones. 

Hence, the hypothesis about the new micro-particle physics, below some exceptional limits, can hardly be defended without modifying everything. In particular, the theories of non-commutative geometry in which the space coordinates fulfil $[x,y]=\sigma\neq0$ as a fundamental identity, could not be constructed. It is enough to note that then the simultaneous squeezing transformation $x\rightarrow\nu x$ and $y\rightarrow\nu y$ where $\nu\neq0$ would ruin the non-commutative law.

As essential consequences would follow from the inverse operations in which the initial wave packet could be amplified. Some time ago, a group of authors studied the properties of the radiation emitted from the {\em extended sources}, asking whether some properties of such sources can be reconstructed from the emitted radiation \cite{rzazewski92}. However, each extended  packet is a linear combination of the localized ones. The question then arises, what would happen if some experiments could {\em fish} in the emitted photon state some components corresponding to the localized parts of the initial source? Would the initial state  be reduced to one of its localized emission points as in the {\em delayed choice experiment} of Wheeler? The idea seemed unreal, so the authors of \cite{rzazewski92} worried rather about the momenta than space localization of the extended source. (The situation, however, could be different in case of  the amplified  particle states on a plane orthogonal to the symmetry axis of some solenoid or ion trap.) 

In fact, the possible state amplification in two dimensions, (e.g. around the  solenoid  axis),  would lead to unsolved  reduction problems no less challenging, such as {\em interaction free of measurement} by Elitzur and Vaidman \cite{elitzur93}.  Here, the finally measured position ${\bf \tilde{q}} = \lambda {\bf q}$ would be an observable commuting with the initial one.  Hence, the observer performing an (imperfect) position measurement in the future could obtain {\em a posteriori} the more exact data about its position in the past. The problem is, whether it does repeat the scenario of the {\em non-demolition measurement} \cite{nondemolishing78,nondemolishing80}. If so, does the reduction of the wave packet affects also the particle state in the past?  

However, the localization must cost some energy ({\em cf.} Wigner, Yanase {\em et al} \cite{yanase64,jauch67,yanase71}). One might suppose that the energy needed to localize the amplified packet in the future was provided by the reserves in the screen grains (or whatever medium in which the particle was finally absorbed). Yet, this would require an assumption that the relatively low energy invested  in the future should be pumped into some higher energy needed for  more precise localization  in the past. This  seems impossible --not just due to the causality paradox, but also, due to the energy deficit!

Some fundamental aspects of quantum mechanics seem to be in opposition. An undeserved return to the old discussions? Although,  one should not forget that all our techniques were based on exploring the evolution matrices \eqref{eq:2} which obey a strictly linear, orthodox quantum mechanics, albeit, is this theory indeed true? Whatever the answer is, it looks like the low energy phenomena might be as close (if not closer) to the fundamental problems of quantum theory as well as high energy physics.

%
%
%
%
\vspace{0.5cm}
\textsc{Acknowledgements}. The authors are indebted to their colleagues at the Department of Physics in CINVESTAV, Mexico City, the summer school in Bialowie$\dot{\text{z}}$a, Poland and the University of Oslo for their interest and helpful remarks. B. Mielnik would like to thank the support from the CONACYT project 152574.
\newpage
%
%
%
%
\appendix
\section{Picking the Suit Amplitudes for Squeezing}
\label{s:appendix}

In principle, the parameters $b,c$ and the function $\gamma$ in \eqref{eq:22} are arbitrary. However, all of these must be carefully chosen so as to construct $\beta$--amplitudes suitable to generate the squeezing (or amplification) operations, {\em i.e.} those positive valued functions $\beta$ vanishing at the beginning and at the end of the symmetric interval $[-\tau,\tau]$. 

We have found that another approach to pick up the appropriate amplitudes can be done by simply observe the polar map produced by the two eigenvalues of the matrix $u$ along the full interval, namely: {\em the squeezing effect will be achieved only if the pair of eigenvalues of $u$ do not generate any {\em complementary pedal curve} at any point in $[-\tau,\tau]$}.

For instance, consider the dotted amplitude on figure \ref{fig:2}(a). To this one corresponds the parameters $b=\frac{9}{5}$ and $c = \frac{7}{2}$, whereas $\gamma = \sin^{2}\tau$ and the initial condition $\beta_{0}$ has been fixed to zero. The matrix $u$ related to this $\beta$--amplitude possesses a pair of eigenvalues which real parts are depicted along the interval $[-\frac{\pi}{2},\frac{\pi}{2}]$ on figure \ref{fig:a}(a). One can notice the formation of kind of {\em complementary pedal curves}, hence, such amplitude would not produce squeezing or amplification. On the other hand, now consider the dashed amplitude on figure \ref{fig:2}(a). This pulse has been parametrized through $b=\frac{7}{4}$ and $c = -3$, with $\gamma = \sin^{2}\tau$ and $\beta_{0}=0$. The corresponding {\em eigentrajectories} of $u$, are depicted on figure \ref{fig:a}(b), as one can see, there are not {\em complementary pedal curves} at any point along the interval, concluding that this amplitude is suitable to produce the {\em soft operations}.

\begin{figure}[h!]
\begin{center}
$\vcenter{\hbox{\includegraphics[width=6.25cm]{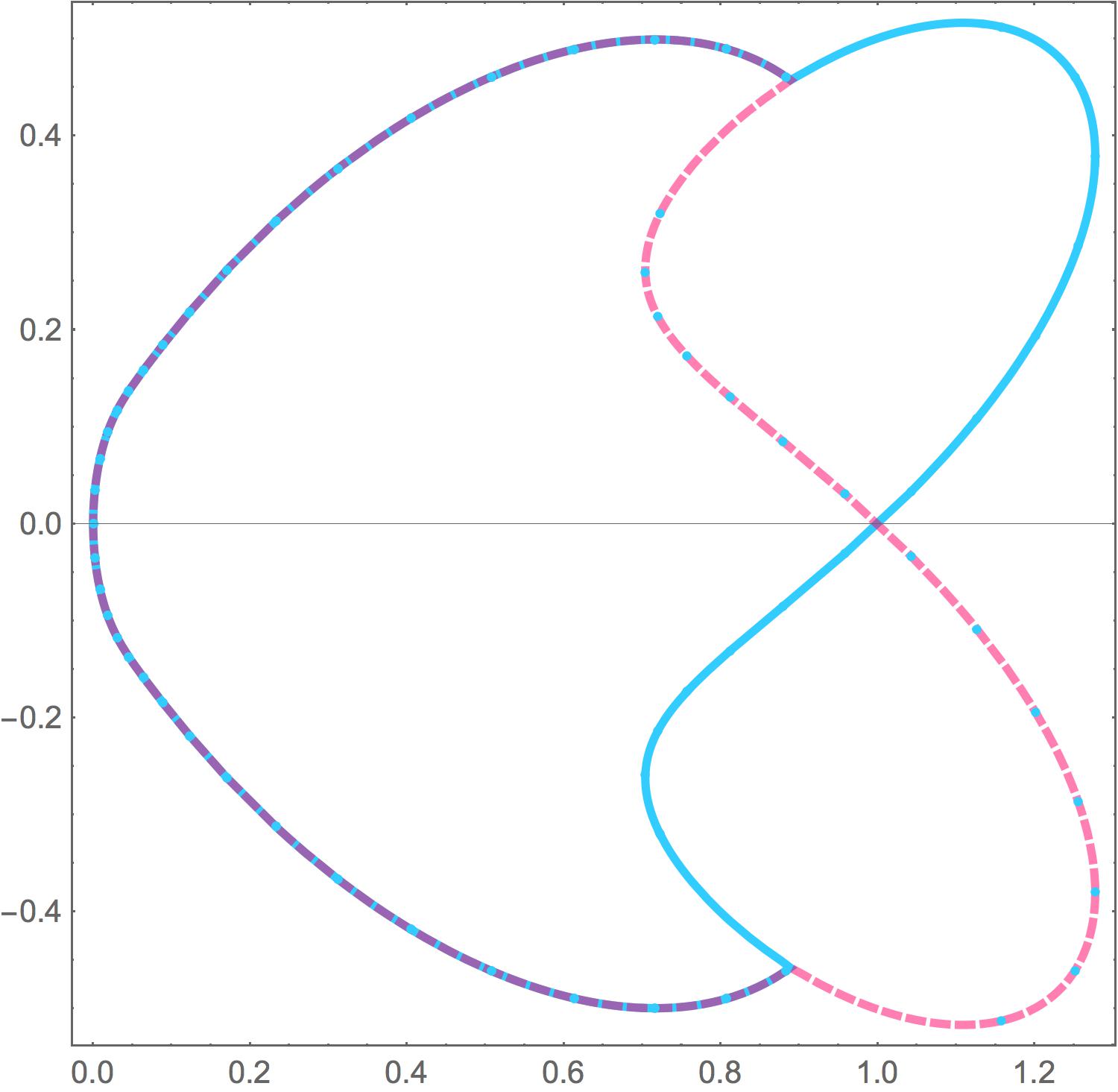}}}$
$\vcenter{\hbox{\includegraphics[width=6.25cm]{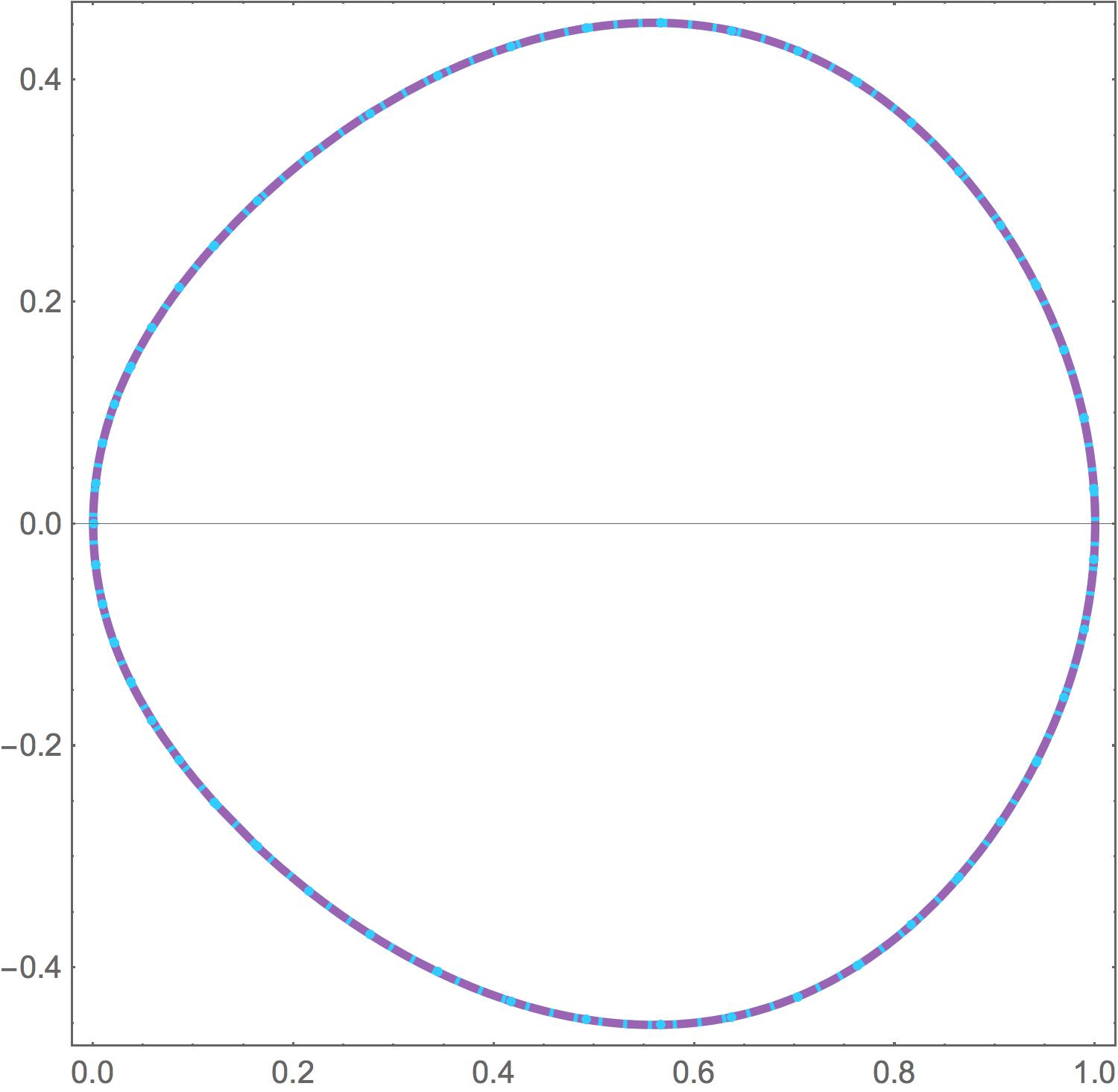}}}$
\\
$\vcenter{\hbox{\footnotesize (a) \hspace{5.5cm} (b)}}$
\caption{Eigentrajectories of the matrix $u$ for the dotted and dashed pulses on figure \ref{fig:2}(a), respectively.} 
\label{fig:a}
\end{center}
\end{figure}

%
%
%
%
\bibliographystyle{unsrt}
\bibliography{references}
\end{document}